%% file: iclr_3198_arXiv.tex
\documentclass{article}
\usepackage{iclr2027_conference,times}
\input{math_commands.tex}

\usepackage{hyperref}
\usepackage{url}
\usepackage{amsmath,amssymb,amsthm,booktabs,graphicx}
\newtheorem{proposition}{Proposition}
\newtheorem{theorem}{Theorem}
\newtheorem{corollary}{Corollary}
 \iclrfinalcopy  

\author{Daria Leshchikova \\
Fleamily, Inc\\
Wilmington, DE, USA \\
\And
Valentina V. Kuskova \& Dmitry Zaytsev \\
Lucy Family Institute for Data \& Society \\
University of Notre Dame \\
Notre Dame, IN, USA \\
\AND
Valerii Klimov \\
Fleamily, Inc\\
Wilmington, DE, USA \\
}

\title{Signed Exposure: Fair Routing of Algorithmic Attention When Attention Can Harm}

\begin{document}\maketitle

\begin{abstract}
Fairness-of-exposure treats algorithmic attention as a good to be distributed equitably. But when an autonomous agent initiates contact, attention is \emph{signed}: it delivers value to a willing receiver and imposes burden on an unwilling one. We formalize routing under signed exposure and show that a fair distribution of attention need not be a fair distribution of \emph{unwanted} attention. Our central result is an incompatibility: within signed-exposure routing, exposure parity (equal contact rates across groups) and burden parity (equal unwanted-contact rates) generically cannot hold at once, and the two are separated by a band that widens as routing grows more selective. A second result shows measurement error is itself a fairness mechanism: group-differential noise in receptivity scores simultaneously inflates a group's exposure and degrades whom it selects, so an apparent exposure-fairness gain is a hidden burden transfer. Calibrating to a public dating-platform survey ($n{=}2{,}499$) that, to our knowledge, uniquely measures receive-side receptivity to conversational agents, we find exposure parity costs only $0.2$--$2.3\%$ of yield yet moves the per-capita burden ratio to $1.7\times$: the tension is between fairness notions, not between fairness and efficiency. Finally, the burden-parity policy is computable by bisection and learnable online: a plug-in learner recovers it at a $2.3\%$ empirical regret premium. The operative design choice in signed-exposure markets is not efficiency versus fairness but \emph{which} fairness.
\end{abstract}

\section{Introduction}
Autonomous agents that converse on a user's behalf are moving into matching platforms, messaging products, and outreach tools. Wherever they ship, a routing policy decides who receives machine-initiated contact. The fairness literature has a well-developed answer for allocation problems of this shape: treat exposure as a scarce good and distribute it equitably \citep{singh2018fairness,biega2018equity}. That answer carries a premise so natural it is rarely stated: that being selected is good for the selected. For algorithmic attention the premise fails. A person who welcomes agent-mediated contact receives value from being routed to; a person who does not receives an unwanted machine solicitation. Exposure is \emph{signed}, and its sign is the receiver's own willingness. A fair distribution of attention, then, need not be a fair distribution of \emph{unwanted} attention; and, we show, it generically cannot be both.

This would be a philosophical quibble if willingness were uniform or idiosyncratic. It is neither. In a public microdataset that, to our knowledge, uniquely measures receive-side receptivity to conversational agents (a bilingual survey of $2{,}499$ active users of a dating platform, released with full psychometric documentation \citep{kuskova2026dataset}), willingness to engage another person's agent varies widely across individuals and systematically across groups: women's receive receptivity sits $0.30$ standard deviations below men's. Any routing policy therefore chooses, implicitly or explicitly, both a distribution of algorithmic attention and a distribution of its unwanted residue, and the two cannot be chosen independently.

We formalize routing in signed-exposure markets and characterize what fairness can and cannot mean there. Four results. \textbf{(i)} The yield--parity frontier is nearly flat: equalizing group exposure rates costs $0.2$--$2.3\%$ of engagement yield in the data, because the frontier's slope depends on \emph{marginal}, not mean, group receptivity (Proposition~1). The efficiency-versus-fairness framing is empirically vacuous here. \textbf{(ii)} What is not vacuous is the choice between fairness notions: exposure parity and burden parity are generically incompatible (Theorem~2). Equalizing who receives attention forces the lower-receptivity group deeper into its own distribution, so equalized exposure arrives with a $1.7\times$ per-capita burden ratio; an explicit band of exposure ratios separates the two parities, and it widens as routing becomes more selective. \textbf{(iii)} Measurement error is itself a fairness mechanism (Theorem~3): group-differential noise in receptivity scores simultaneously inflates the noisier group's exposure and degrades whom it selects, so an apparent exposure-fairness gain is in substance a burden transfer. No pooled score transformation escapes this: group-blind transformations are provably inert, and posterior-mean shrinkage fails on the opposite margin from raw scores (Proposition~4). The noise itself we measure rather than assume: the released model's posterior standard deviations show near-equity by gender and a $7.5\%$ differential by language form. \textbf{(iv)} Constrained optima are computable by bisection after sorting (Proposition~5), and the constraint survives learning: a plug-in learner that estimates its own burden-parity constraint from feedback converges to parity at a $2.3\%$ empirical regret premium (30 seeds), while unconstrained exploration transiently over-burdens the lower-receptivity group exactly as the static noise theory predicts (Proposition~6).

The practical upshot inverts the usual fairness conversation. Platforms deploying agent routing do not face a painful efficiency--fairness trade-off; they face a value choice between two parities that cannot coexist, with an efficient algorithm for any point on the menu, one that \emph{learns} the burden-parity point from feedback when propensities are unknown. They also face a measurement-equity audit obligation, since unequal score precision quietly redistributes burden even under nominally fair allocation. These conclusions bear on disclosure and opt-in policy, on the auditing of scoring systems, and on any learning system whose exploration imposes contact on people who did not ask for it.

\section{Related work}

\textbf{Fairness of exposure in ranking and recommendation.} The allocation of attention as a fairness object originates in fair top-$k$ ranking \citep{zehlike2017fair} and matures into exposure-based formulations: \citet{singh2018fairness} allocate expected exposure across items in proportion to merit, \citet{biega2018equity} amortize individual exposure over ranking sequences, \citet{diaz2020evaluating} make expected exposure the evaluation target itself, and deployed systems operationalize demographic exposure constraints at scale \citep{geyik2019fairness}. Dynamic and two-sided extensions control exposure under learned relevance \citep{morik2020controlling} and balance producer against consumer interests \citep{patro2020fairrec,suhr2019twosided}; \citet{zehlike2022fairness} survey the area. Throughout this literature exposure is a good: more of it benefits the exposed, and fairness means distributing it. Our setting breaks the premise: exposure carries the sign of the receiver's willingness, and Theorem~2 shows that once exposure is signed, its equitable distribution and the equitable distribution of its harms are generically incompatible objectives rather than one objective seen twice.

\textbf{Impossibility results in algorithmic fairness.} Sharp incompatibilities between fairness criteria are foundational for classification: calibration cannot coexist with balanced error rates under unequal base rates \citep{kleinberg2017inherent,chouldechova2017fair}, and different worldviews about measurement provably demand different mechanisms \citep{friedler2021impossibility}. This incompatibility, specific to signed-exposure routing rather than a claim across all mechanisms, differs from the classification impossibilities in locus and driver. It attaches to the \emph{allocation} layer rather than to properties of a predictor; it is driven by first-moment differences in group receptivity rather than base rates; and it is quantitative rather than binary: the incompatible parities are separated by an explicit band $(r^\ast, 1)$ whose width is a comparative static in the routing budget, collapsing as selectivity relaxes.

\textbf{Fairness and measurement.} A second foundational line locates fairness failures in measurement itself: proxy label choice misallocates care at scale \citep{obermeyer2019dissecting}, unmodeled constructs undermine fairness claims \citep{jacobs2021measurement}, and noisy labels distort fairness evaluation \citep{fogliato2020fairness}. Closest to our Theorem~3 are selection models with group-dependent score distortion: multiplicative implicit bias in estimates \citep{kleinberg2018selection} and, especially, group-differential estimation \emph{variance}, which alters who clears a selection bar \citep{emelianov2020fair}. We extend this line in three directions: the selection externality is two-sided (noise moves a group's exposure and its burden in the same direction, Theorem~3(iii)); the shrinkage escape route is closed (posterior-mean and raw-score routing fail on opposite margins, Proposition~4); and the noise quantities are measured rather than assumed: group-specific psychometric posteriors from the released measurement model, showing near-equity by gender and a $7.5\%$ language differential.

\textbf{Fair sequential allocation.} Meritocratic fairness in bandits \citep{joseph2016fairness} constrains which arms may be preferred during learning; constrained exploration more broadly is formalized through bandits with knapsacks \citep{badanidiyuru2018bandits}. Our Proposition~6 constrains a different object: not the meritocracy of selection but the distribution of exploration's \emph{externality}, the burden that uncertain routing imposes on unwilling receivers. Our experiments show unconstrained learning transiently misallocates exactly this quantity, in the direction the static noise theory predicts.

\textbf{Reciprocal matching and agentic systems.} Reciprocal recommendation conditions relevance on both sides' preferences \citep{palomares2021reciprocal}, and two-sided marketplace fairness balances stakeholder groups \citep{suhr2019twosided,patro2020fairrec}; autonomous conversational agents sharpen the receive side into a first-class design surface, since the routed object now initiates interaction rather than awaiting it. Prior two-sided work thus already recognizes that exposure and utility can diverge; what it does not model is unwanted, algorithmically \emph{initiated} contact as a receiver-side negative externality, and it does not characterize the resulting exposure--burden parity geometry or its interaction with estimation uncertainty. Those are the contributions here. We build on the public receptivity dataset of \citet{kuskova2026dataset}, whose released propensities and measurement diagnostics supply our calibration and, to our knowledge among the few public sources of ground truth for receive-side willingness toward agent-mediated contact.

\section{Setup}\label{sec:setup}

A unit mass of receivers is partitioned into groups $g \in \{A,B\}$ with population
shares $\pi_g$. Each receiver $j$ has an \emph{engagement propensity}
$e_j \in [0,1]$: the probability that an algorithmically routed contact is welcome.
Within group $g$, propensities follow distribution $F_g$ with survival quantile
function $Q_g(t)$ (the $t$-th upper quantile) and \emph{top-mean function}
\[
m_g(t) \;=\; \E\!\left[e \mid e \ge Q_g(t)\right]
\qquad t \in (0,1],
\]
the mean propensity of group $g$'s most receptive fraction $t$. Each $m_g$ is
continuous and strictly decreasing wherever $F_g$ is non-degenerate.

A \emph{routing policy} at budget $q$ selects a measurable set $S$ of receivers
with total mass $q$; each selected receiver absorbs one unit of exposure. Write
$t_g$ for the fraction of group $g$ selected, so $\sum_g \pi_g t_g = q$. Define:
\begin{align*}
\text{exposure (per-capita) of } g \quad & t_g \;=\; |S \cap g|/(\pi_g N), \\
\text{burden per selected receiver} \quad & b_g \;=\; 1-\bar e_g(S), \\
\text{per-capita group burden} \quad & h_g \;=\; t_g\, b_g, \\
\text{yield per contact} \quad & Y \;=\; \tfrac{1}{q}\textstyle\sum_g \pi_g t_g\, \bar e_g(S), \\
\text{exposure ratio } r = t_A/t_B, \quad & \text{burden ratio } \beta = h_A/h_B,
\end{align*}
where $\bar e_g(S)$ is the mean propensity of the selected members of $g$. We
keep these three distinct throughout: $t_g$ (who is contacted), $b_g$ (how
unwelcome the contact is, per selected receiver), and $h_g = t_g b_g$ (their
product, the group's per-capita unwanted-contact burden). Exposure parity is
$r=1$; burden parity is $\beta=1$.
Exposure here is \emph{signed}: a routed contact delivers value $e_j$ and imposes
burden $1-e_j$, so fairness has two natural currencies: who receives
algorithmic attention ($r=1$: \emph{exposure parity}) and who absorbs its
unwanted residue ($\beta=1$: \emph{burden parity}).

\textbf{Generalized burden (benefit and harm need not be complementary).} A
reader might worry that defining burden as $1-e_j$ makes value and harm exact
complements, so the incompatibility is an artifact of that choice. It is not.
Take signed contact utility $u_j = e_j v_j - (1-e_j)c_j$ with independent
benefit $v_j \ge 0$ and harm intensity $c_j \ge 0$; the burden object is
$(1-e_j)c_j$, and willingness $e_j$, benefit $v_j$, and harm $c_j$ may vary
independently. Under group-constant harm intensities $c_g$, per-capita group
burden is $h_g = c_g t_g(1-\bar e_g) = c_g t_g b_g$, the parity function of
Theorem~\ref{thm:parity}(ii) becomes $\phi_g(t) = c_g\, t(1-m_g(t))$, still
strictly increasing, and \emph{every} result below carries over, with the
simultaneous-parity condition $m_A(q)=m_B(q)$ replaced by
$c_A(1-m_A(q)) = c_B(1-m_B(q))$. The incompatibility does not depend on
complementarity: it survives whenever groups differ in willingness or in harm
intensity. We adopt the unit normalization ($v_g=c_g=1$) for exposition and
speak of \emph{expected unwanted-contact burden} throughout.

\section{The flat frontier and the parity incompatibility}

\begin{proposition}[Within-group threshold optimality and the frontier]
\label{prop:frontier}
Fix budget $q$ and any exposure ratio $r>0$, which determines
$(t_A,t_B)$ by $t_A = r t_B$ and $\pi_A t_A + \pi_B t_B = q$. Among all policies
with these group fractions, yield is maximized by selecting each group's top
$t_g$ fraction by propensity, giving
\[
Y^\star(q,r) \;=\; \frac{1}{q}\Bigl[\pi_A t_A\, m_A(t_A) + \pi_B t_B\, m_B(t_B)\Bigr],
\]
and the frontier's slope satisfies
\[
\frac{\partial Y^\star}{\partial r}
\;\propto\; Q_A(t_A) - Q_B(t_B),
\]
the difference between the two groups' \emph{marginal} propensities at their
respective admission thresholds.
\end{proposition}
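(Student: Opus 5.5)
The proof has two parts: an exchange argument for within-group optimality, and a one-variable derivative computation for the slope.

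\textbf{Within-group optimality.} Fix $(t_A,t_B)$. Yield is $Y=\frac1q\sum_g \pi_g t_g\,\bar e_g(S)$, and the constraint on group $g$ touches only that group's selected mass. So the problem separates across groups. For each $g$ I would maximize the total selected propensity $\int_{S\cap g} e\,dF_g$ over sets of $F_g$-mass $t_g$. This is a bathtub (Neyman--Pearson) argument. Let $S^\ast_g=\{e\ge Q_g(t_g)\}$, randomizing at the threshold if $F_g$ has an atom there so the mass is exactly $t_g$. Take any admissible $S_g$. Every point of $S^\ast_g\setminus S_g$ has $e\ge Q_g(t_g)$, every point of $S_g\setminus S^\ast_g$ has $e\le Q_g(t_g)$, and the two differences have equal mass. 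Hence
\[
\int_{S^\ast_g} e - \int_{S_g} e \;\ge\; Q_g(t_g)\bigl(|S^\ast_g\setminus S_g| - |S_g\setminus S^\ast_g|\bigr) \;=\; 0 .
\]
The top-$t_g$ selection has mean $m_g(t_g)$ by definition, which gives the displayed formula for $Y^\star$.

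\textbf{Slope.} Write $G_g(t)=t\,m_g(t)=\int_0^t Q_g(s)\,ds$, the upper-tail integral. Then $G_g'(t)=Q_g(t)$ at continuity points of $Q_g$, i.e.\ where $F_g$ is non-degenerate as assumed in the Setup. Next parametrize by $r$: from $t_A=rt_B$ and $\pi_A t_A+\pi_B t_B=q$,
\[
t_B=\frac{q}{\pi_A r+\pi_B}, \qquad t_A=r\,t_B .
\]
Differentiating the budget constraint gives $\pi_A\,dt_A/dr=-\pi_B\,dt_B/dr$. One checks directly that
\[
\frac{dt_B}{dr}=-\frac{\pi_A t_B}{\pi_A r+\pi_B}<0 .
\]
Hence $qY^\star=\pi_A G_A(t_A)+\pi_B G_B(t_B)$ has derivative
\[
\frac{\partial (qY^\star)}{\partial r}
=\pi_A Q_A(t_A)\frac{dt_A}{dr}+\pi_B Q_B(t_B)\frac{dt_B}{dr}
=\pi_A\frac{dt_A}{dr}\bigl[Q_A(t_A)-Q_B(t_B)\bigr].
\]
The proportionality factor $\frac{\pi_A}{q}\frac{dt_A}{dr}=\frac{\pi_A\pi_B t_B}{q(\pi_A r+\pi_B)}$ is strictly positive. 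So the sign and magnitude of the slope are governed by $Q_A(t_A)-Q_B(t_B)$, as claimed.

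\textbf{Main obstacle.} The delicate point is the identity $G_g'=Q_g$ when $F_g$ has atoms, since $Q_g$ then jumps. I would state the result where $Q_g$ is continuous; elsewhere I would take one-sided derivatives, so the slope is bracketed by left and right marginal quantiles. The exchange argument with threshold randomization covers the optimality part in full generality.
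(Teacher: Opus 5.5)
Your proof is correct and follows the paper's route. The paper also separates the problem by group and cites the fractional-knapsack argument (your bathtub exchange) for top-$t_g$ selection. It gets the slope by shifting a mass $d\mu=\pi_A\,dt_A=-\pi_B\,dt_B$ of admissions from $B$'s marginal admitted receiver to $A$'s marginal excluded one, which is exactly your chain-rule computation on $G_g(t)=\int_0^t Q_g(s)\,ds$; your version additionally makes the positive proportionality constant explicit.

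One small correction to your closing remark. Atoms of $F_g$ make $Q_g$ flat, not discontinuous; they matter only for the definition of $m_g$, which your threshold randomization already handles. Jumps of $Q_g$, and hence kinks of $G_g$, come from gaps in the support of $F_g$, so continuity of $Q_g$ is not the same as non-degeneracy of $F_g$. Your one-sided-derivative fix still applies unchanged.
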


\begin{proof}
For fixed $(t_A,t_B)$ the objective is additive across receivers, so the
selection problem separates by group and is a fractional-knapsack instance:
the top-$t_g$ set maximizes each group term. For the slope, increase $r$ by
moving mass $d\mu$ from $B$'s marginal admitted receivers (propensity
$Q_B(t_B)$) to $A$'s marginal excluded receivers (propensity $Q_A(t_A)$);
the yield changes by $[Q_A(t_A)-Q_B(t_B)]\,d\mu / q$.
\end{proof}

\noindent\textbf{Why the empirical frontier is flat.} The slope depends on the
gap between marginal, not mean, propensities. When the group distributions
overlap substantially, threshold receptivities at parity-relevant depths nearly
coincide even though group means differ; in the calibrated data the yield loss
from moving $r_0 \to 1$ is $0.2$--$2.3\%$ across budgets. The efficiency cost of
exposure fairness is second-order; what is first-order is the next section.

\begin{theorem}[Exposure--burden parity incompatibility]
\label{thm:parity}
Under the optimal group-threshold policies of Proposition~\ref{prop:frontier}
at budget $q$:
\begin{enumerate}
\item[(i)] Exposure parity ($r=1$, so $t_A=t_B=q$) implies
$\displaystyle \beta = \frac{1-m_A(q)}{1-m_B(q)}$, which equals $1$ iff
$m_A(q)=m_B(q)$.
\item[(ii)] Burden parity ($\beta=1$) holds iff
$t_A\bigl(1-m_A(t_A)\bigr) = t_B\bigl(1-m_B(t_B)\bigr)$. If
$m_A(t) < m_B(t)$ for all $t$ in the relevant range, every burden-parity
solution has $t_A < t_B$, i.e.\ $r^\ast < 1$.
\item[(iii)] Consequently exposure parity and burden parity hold simultaneously
iff $m_A(q) = m_B(q)$. Whenever the top-fraction means differ, the two parity
conditions select disjoint policies, separated by the band $(r^\ast, 1)$ in
which \emph{neither} parity holds.
\end{enumerate}
\end{theorem}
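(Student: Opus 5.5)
The plan is to reduce everything to the per-group burden function $\phi_g(t) = t\,(1-m_g(t))$. Under the top-$t_g$ policies of Proposition~\ref{prop:frontier}, the selected members of group $g$ have mean propensity $\bar e_g(S)=m_g(t_g)$. Hence $b_g = 1-m_g(t_g)$ and $h_g = \phi_g(t_g)$, and each part of the theorem becomes a statement about two scalar functions of one variable.

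For (i), substitute $t_A=t_B=q$. This gives $\beta = q(1-m_A(q))/\bigl(q(1-m_B(q))\bigr)$. The factor $q$ cancels, and since $1-m_B(q)>0$ (we assume $m_B(q)<1$), we get $\beta=1$ iff $m_A(q)=m_B(q)$.

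For (ii), the iff is just the definition $h_A=h_B$ rewritten through $\phi_g$. The substantive step is the sign claim, and the key lemma I would prove first is that $\phi_g$ is strictly increasing in $t$. Write $\phi_g(t) = t - t\,m_g(t)$, where $t\,m_g(t)=\int_0^t Q_g(s)\,ds$ is the total propensity of the top-$t$ mass. Then $\phi_g'(t) = 1 - Q_g(t) \ge 0$, with strict inequality wherever $Q_g(t)<1$, which holds on the relevant range once propensities are not almost surely $1$. The argument for $r^\ast<1$ then runs by contradiction. Suppose burden parity holds with $t_A\ge t_B$. Monotonicity gives $\phi_A(t_A)\ge \phi_A(t_B)$. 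The hypothesis $m_A<m_B$ gives $\phi_A(t_B) = t_B(1-m_A(t_B)) > t_B(1-m_B(t_B)) = \phi_B(t_B)$. So $h_A>h_B$, a contradiction. Therefore $t_A<t_B$.

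Existence and uniqueness of $r^\ast$ along the budget line $\pi_At_A+\pi_Bt_B=q$ I would get by parametrizing with $t_A$. As $t_A$ increases, $t_B$ decreases, so $\phi_A(t_A)-\phi_B(t_B)$ is strictly increasing. It is continuous because $m_g$ is continuous, and it changes sign between the endpoints (at $t_A=0$ it equals $-\phi_B<0$). The intermediate value theorem then yields a unique root.

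For (iii), combine the two parts. Exposure parity is the single point $t_A=t_B=q$, and by (i) it satisfies burden parity iff $m_A(q)=m_B(q)$. By the uniqueness just shown, the burden-parity policy is then that same point. If $m_A(q)\ne m_B(q)$, the two policies are distinct points on the budget line. Strict monotonicity of the difference $\phi_A(t_A)-\phi_B(t_B)$ in $r$ shows that $\beta\ne1$ for every $r\ne r^\ast$, and in particular throughout $(r^\ast,1)$, while $r\ne1$ there. When $m_A<m_B$, part (ii) orients the band as $(r^\ast,1)$; otherwise it is $(1,r^\ast)$ by symmetry.

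The main obstacle is the monotonicity lemma for $\phi_g$ together with the endpoint and sign-change bookkeeping on the budget line. I would handle it through the derivative identity above, interpreting $Q_g$ as a one-sided quantile when $F_g$ has atoms. I would also restrict $t_g\le1$ to the feasible interval and check the sign at its ends, so that $r^\ast$ is well defined.
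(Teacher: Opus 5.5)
Your proposal is correct and follows the paper's proof: you reduce to $\phi_g(t)=t(1-m_g(t))$, use its strict monotonicity and the pointwise inequality $\phi_A>\phi_B$ to force $t_A<t_B$, and combine (i) with (ii), and your identity $\phi_g'(t)=1-Q_g(t)$ is a cleaner justification of the monotonicity that the paper only asserts verbally. For existence you bracket at $t_A=0$, where $\beta=0$, instead of using the paper's condition $\beta(r_0)<1<\beta(1)$ at the unconstrained ratio. This needs no condition at $r_0$, though it no longer places $r^\ast$ above $r_0$. However, the endpoint $t_A=0$ is feasible only when $q\le\pi_B$, and without it existence can genuinely fail (at $q=1$ the only feasible point is $r=1$, where $\beta\neq1$), so you should state that restriction explicitly rather than leave the sign check open.
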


\begin{proof}
(i) With $t_A=t_B=q$, $h_g = q(1-m_g(q))$; take the ratio.
(ii) The burden-parity equation is as stated by definition of $h_g$. Define
$\phi_g(t) = t(1-m_g(t))$; each $\phi_g$ is continuous and strictly increasing,
since adding lower-propensity receivers raises both selected mass and mean
burden per selected receiver. If $m_A < m_B$ pointwise then
$\phi_A(t) > \phi_B(t)$ for every $t$, so $\phi_A(t_A)=\phi_B(t_B)$ forces
$t_A < t_B$ by monotonicity.
(iii) Immediate from (i) and (ii): $r=1$ with $\beta=1$ requires
$m_A(q)=m_B(q)$; conversely if the means differ at $q$, (i) gives
$\beta \ne 1$ at $r=1$ while (ii) places the burden-parity point strictly
below $r=1$; existence of $r^\ast$ in $(r_0,1)$ follows from continuity of
$r \mapsto \beta(r)$ and $\beta(r_0)<1<\beta(1)$ when $A$ is the
lower-propensity group and $r_0$ is the unconstrained (pooled-threshold)
ratio.
\end{proof}

\begin{proposition}[Existence, uniqueness, and monotone geometry]
\label{prop:unique}
Within the group-threshold family at budget $q$, the burden ratio $\beta(r)$ is
continuous and strictly increasing on the feasible range of $r$. Consequently,
whenever $\beta(r_0) < 1 < \beta(1)$, which holds iff group $A$ bears
strictly less per-capita burden at the unconstrained optimum and strictly more
under exposure parity, the burden-parity point $r^\ast$ exists, is unique,
and lies strictly in $(r_0, 1)$.
\end{proposition}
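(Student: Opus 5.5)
Within the group-threshold family of Proposition~\ref{prop:frontier}, the budget constraint $\pi_A t_A + \pi_B t_B = q$ and $t_A = r t_B$ give
\[
t_B(r) = \frac{q}{\pi_A r + \pi_B},
\qquad
t_A(r) = \frac{q r}{\pi_A r + \pi_B}.
\]
On the feasible range (where both fractions lie in $(0,1]$), $t_A(r)$ is strictly increasing in $r$ and $t_B(r)$ is strictly decreasing. Both are continuous. Using the function $\phi_g(t) = t(1-m_g(t))$ from the proof of Theorem~\ref{thm:parity}(ii), we can write
\[
\beta(r) = \frac{\phi_A(t_A(r))}{\phi_B(t_B(r))}.
\]
The plan is to show that the numerator is continuous and strictly increasing in $r$, and that the denominator is continuous, strictly positive, and strictly decreasing. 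Then $\beta$ is a ratio of a positive increasing function over a positive decreasing one, and so it is continuous and strictly increasing.

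The key step, and the main obstacle, is a self-contained proof that $\phi_g$ is strictly increasing and continuous. Theorem~\ref{thm:parity} asserted this only informally. I will write
\[
\phi_g(t) = \int_0^t \bigl(1 - Q_g(s)\bigr)\,ds,
\]
which is valid because $t\,m_g(t) = \int_0^t Q_g(s)\,ds$ (the top-$t$ selection consists of exactly the upper quantiles $s \le t$). Continuity follows immediately. Strict monotonicity follows if $1 - Q_g(s) > 0$ on a set of positive measure in any interval, which holds unless $e = 1$ on that interval. So I will add the mild nondegeneracy assumption that $Q_g(s) < 1$ for $s > 0$: a group cannot have positive mass of receivers who welcome contact with certainty. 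The same assumption gives positivity of the denominator, $\phi_B(t) > 0$ for $t > 0$. If ties or atoms in $F_g$ arise, the representation still holds with fractional selection at the threshold, which the fractional-knapsack argument of Proposition~\ref{prop:frontier} already allows.

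Given strict monotonicity and continuity, the consequences follow by the intermediate value theorem. If $\beta(r_0) < 1 < \beta(1)$, there is some $r^\ast \in (r_0, 1)$ with $\beta(r^\ast) = 1$, and strict monotonicity makes it unique. For the stated characterization, $\beta(r_0) < 1$ means $h_A < h_B$ at the pooled-threshold optimum. Likewise, $\beta(1) > 1$ means $h_A > h_B$ under exposure parity, which by Theorem~\ref{thm:parity}(i) is equivalent to $m_A(q) < m_B(q)$. This is just the definition of $\beta$ as a ratio of per-capita burdens, so the equivalence is immediate.

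Monotonicity of $\beta$ also shows that $r_0 < 1$ whenever both inequalities hold, so the interval $(r_0, 1)$ is nonempty. Finally, I will check that $[r_0, 1]$ lies inside the feasible range. Both fractions stay in $[0,1]$ there, since $t_A$ and $t_B$ lie between their values at the endpoints, and at each endpoint they are valid selection fractions.
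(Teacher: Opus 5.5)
Your proof is correct and follows the paper's route: $t_A$ is strictly increasing and $t_B$ strictly decreasing in $r$, each $\phi_g(t)=t(1-m_g(t))$ is strictly increasing, so the ratio $\beta=\phi_A(t_A)/\phi_B(t_B)$ is strictly increasing, and the intermediate value theorem finishes the argument. The one real addition is your representation $\phi_g(t)=\int_0^t(1-Q_g(s))\,ds$, which rigorously supplies the continuity and strict monotonicity of $\phi_g$ that the paper takes from the one-line argument in Theorem~\ref{thm:parity}(ii); your nondegeneracy condition $Q_g(s)<1$ for $s>0$ is not an extra restriction, since it is already implied by the setup's assumption that each $m_g$ is strictly decreasing.
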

\begin{proof}
Fix $q$. As $r$ increases, $t_A$ strictly increases and $t_B$ strictly
decreases (both are continuous in $r$ through the budget identity
$\pi_A t_A + \pi_B t_B = q$ with $t_A = r t_B$). By the strict monotonicity of
$\phi_g(t) = t(1-m_g(t))$ established in Theorem~\ref{thm:parity}(ii),
$h_A = \phi_A(t_A)$ strictly increases and $h_B = \phi_B(t_B)$ strictly
decreases, so $\beta = h_A/h_B$ is strictly increasing and continuous.
Existence and uniqueness of the root of $\beta(r)=1$ on $(r_0,1)$ follow from
the intermediate value theorem and strict monotonicity.
\end{proof}

\textbf{Scarcity and the band (empirical comparative static).} In the data the
band $1-r^\ast(q)$ widens monotonically as the budget shrinks
($0.09 \to 0.17 \to 0.30$ as $q: 0.50 \to 0.25 \to 0.10$; identically ordered
under the soft coding). A sufficient analytical condition is that the
top-mean gap $m_B(t)-m_A(t)$ grows as $t \to 0$; since this is a tail
assumption on the receptivity distributions rather than a consequence of the
model, we report the comparative static as a calibrated regularity of the
data and state the tail condition explicitly where it is used.

\paragraph{Empirical parity points (public release, strict coding).}
\begin{center}\begin{tabular}{lccccc}
\toprule
Budget $q$ & $r_0$ (unconstr.) & $r^\ast$ (burden parity) & band $1-r^\ast$ & $\beta$ at $r{=}1$ & yield loss $r_0 \to 1$ \\
\midrule
0.10 & 0.54 & 0.70 & 0.30 & 1.69 & $-2.3\%$ \\
0.25 & 0.66 & 0.83 & 0.17 & 1.29 & $-0.6\%$ \\
0.50 & 0.80 & 0.91 & 0.09 & 1.13 & $-0.2\%$ \\
\bottomrule
\end{tabular}\end{center}

Soft coding replicates the pattern ($r^\ast = 0.70/0.81/0.88$;
$\beta$ at parity $=1.74/1.37/1.21$).

\begin{figure}[h]\includegraphics[width=\textwidth]{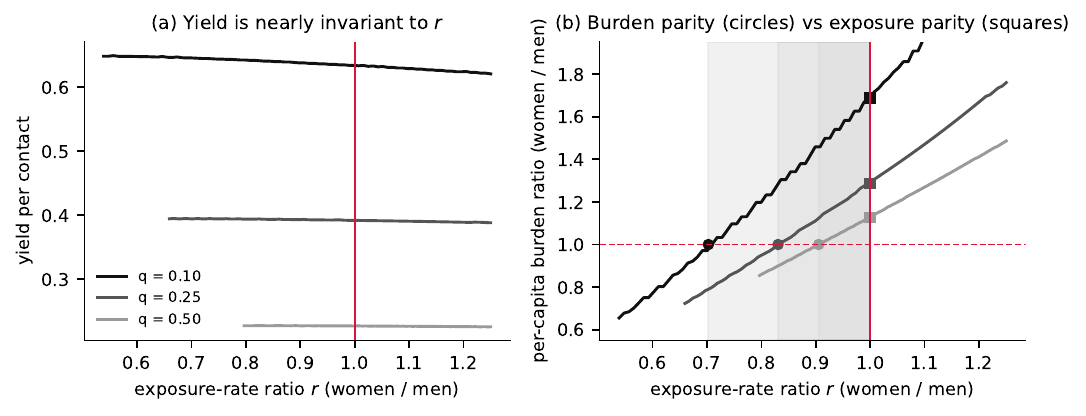}\caption{Signature: flat yield; disjoint parities.}\label{fig:signature}\end{figure}

\section{Measurement noise converts exposure into burden}

Routing operates on estimates. Let each receiver's latent receptivity be
$\theta$, with engagement probability $e(\theta)$ strictly increasing, and let
group $g$'s latent distribution be $N(\mu_g, s_g^2)$. The router observes
$\hat\theta = \theta + \varepsilon$, $\varepsilon \sim N(0,\sigma_g^2)$
independent across receivers, and applies a pooled threshold $\tau$ chosen so
that total selected mass is $q$. Write $v_g = \sqrt{s_g^2+\sigma_g^2}$,
$x_g = (\tau-\mu_g)/v_g$, reliability $\rho_g = s_g^2/v_g^2$, and
$\lambda = \varphi/\bar\Phi$ for the Gaussian hazard.

\begin{theorem}[Noise decomposition of exposure and targeting]
\label{thm:noise}
Assume Gaussian latent receptivity $\theta \sim N(\mu_g, s_g^2)$ within groups,
Gaussian score noise $\hat\theta = \theta + \varepsilon$,
$\varepsilon \sim N(0,\sigma_g^2)$, a probit engagement link
$e(\theta) = \Phi((\theta-b)/c)$, and the scarce regime
$x_g = (\tau-\mu_g)/v_g > 0$ with marginal selected engagement above one half.
Under pooled-threshold routing at budget $q$, using the standardized
quantities $t_g, b_g, h_g$ of Section~\ref{sec:setup}:
\begin{enumerate}
\item[(i)] Group exposure is $t_g = \bar\Phi(x_g)$, and for a group below the
threshold on average ($x_g>0$, the scarce regime),
$\partial t_g / \partial v_g = \varphi(x_g)\,x_g / v_g > 0$ at fixed
$\tau$; and at fixed \emph{budget} $q$, with the threshold adjusting to keep
$\sum_g \pi_g t_g = q$, the effect remains strictly positive and equals the
fixed-$\tau$ effect scaled by the other group's share of threshold
sensitivity:
\[
\frac{dt_A}{dv_A}\Big|_{q}
= \frac{\varphi(x_A)\,x_A}{v_A}\cdot
\frac{\pi_B \varphi(x_B)/v_B}{\pi_A \varphi(x_A)/v_A + \pi_B \varphi(x_B)/v_B}
\;>\; 0 \quad (x_A > 0).
\]
\item[(ii)] The selected members' mean latent receptivity is
\[
\E[\theta \mid \hat\theta \ge \tau, g] \;=\; \mu_g + \frac{s_g^2}{v_g}\,\lambda(x_g)
\;=\; \mu_g + s_g\sqrt{\rho_g}\;\lambda(x_g),
\]
strictly decreasing in $\sigma_g$: noise attenuates within-group targeting by
the square root of reliability.
\item[(iii)] Consequently a group's noise differential raises all three
quantities together: as $\sigma_A$ grows, exposure $t_A$ rises (part i) and
burden per selected receiver $b_A = 1 - \E[e(\theta)\mid \text{sel},A]$ rises
(part ii, via the probit link), so per-capita group burden
$h_A = t_A b_A$ rises on both factors. Noise-induced exposure is exposure of
the wrong members: it converts attention into burden.
\end{enumerate}
\end{theorem}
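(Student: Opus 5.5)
The plan is to reduce everything to bivariate-normal computations within a group. Write $\hat\theta\mid g \sim N(\mu_g, v_g^2)$. Then $t_g = \Pr(\hat\theta\ge\tau\mid g)=\bar\Phi(x_g)$ follows directly.

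\textbf{Part (i).} Differentiate in $v_g$ at fixed $\tau$. Since $\partial x_g/\partial v_g = -x_g/v_g$, we get $\partial t_g/\partial v_g = \varphi(x_g)\,x_g/v_g$, which is positive exactly when $x_g>0$.

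For the fixed-budget version, treat $\tau$ as an implicit function of $v_A$ through $G(\tau,v_A)=\sum_g \pi_g\bar\Phi(x_g)-q=0$. We have $\partial_\tau t_g = -\varphi(x_g)/v_g$, so implicit differentiation gives
\[
\frac{d\tau}{dv_A}=\frac{\pi_A\varphi(x_A)x_A/v_A}{\sum_g \pi_g\varphi(x_g)/v_g}.
\]
Substituting into $dt_A/dv_A=\partial_{v_A}t_A+\partial_\tau t_A\cdot d\tau/dv_A$ and collecting terms over the common denominator leaves the factor $\pi_B\varphi(x_B)/v_B$ in the numerator. This is the stated expression, and it is strictly positive for $x_A>0$.

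\textbf{Part (ii).} Use the Gaussian regression $\E[\theta\mid\hat\theta]=\mu_g+\rho_g(\hat\theta-\mu_g)$ and iterated expectations. Together with the truncated-normal mean $\E[\hat\theta-\mu_g\mid\hat\theta\ge\tau]=v_g\lambda(x_g)$, this gives
\[
\E[\theta\mid\hat\theta\ge\tau,g]=\mu_g+(s_g^2/v_g)\,\lambda(x_g)=\mu_g+s_g\sqrt{\rho_g}\,\lambda(x_g).
\]

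Monotonicity in $\sigma_g$ at fixed $\tau$ is the delicate step. Raising $\sigma_g$ raises $v_g$. This lowers $s_g^2/v_g$, but it also lowers $x_g$, and $\lambda$ is increasing, so $\lambda(x_g)$ falls as well. Both factors are positive and both decrease, so the product strictly decreases. At fixed budget, $\tau$ shifts by the amount computed in (i). I would state monotonicity at fixed $\tau$ and check the budget-adjusted case separately. The obstacle there is that the threshold moves: by (i), $d\tau/dv_A>0$, so $x_A$ partly recovers. I would bound the total derivative using $\lambda'(x)=\lambda(x)(\lambda(x)-x)\in(0,1)$. If the resulting sign condition does not reduce cleanly, I would restrict the claim to fixed $\tau$.

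\textbf{Part (iii).} Translate to engagement with the probit identity. Conditionally on selection, $\theta$ is distributed as a skew-normal, and
\[
\E[\Phi((\theta-b)/c)\mid\mathrm{sel}]=\Pr(Z\le(\theta-b)/c\mid\mathrm{sel})
\]
with $Z$ independent standard normal. This is a bivariate-normal orthant probability in $(\theta,\hat\theta)$. Its correlation structure changes with $\sigma_g$ in a way that is not captured by the mean alone.

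The first route I would try is to argue that $\theta\mid\hat\theta\ge\tau$ decreases in the first-order stochastic sense as $\sigma_g$ grows. The conditional law given $\hat\theta$ is $N(\mu_g+\rho_g(\hat\theta-\mu_g),\,s_g^2(1-\rho_g))$, so noise both flattens the slope and inflates the conditional variance. That variance inflation is exactly where the hypothesis that marginal selected engagement exceeds $1/2$ enters: in the region where $e$ is concave, extra spread lowers the mean of $e$, reinforcing the mean drop from (ii). Hence $b_A$ rises. Combined with $t_A$ rising from (i), $h_A=t_Ab_A$ rises on both factors.

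I expect this step, converting the mean-receptivity decrease into a decrease in mean engagement under the probit link, to be the main obstacle. The fallback is to differentiate the orthant probability directly in $\sigma_g$ and use the scarce-regime sign conditions.
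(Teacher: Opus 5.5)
\textbf{Parts (i) and (ii).} These follow the paper's argument. In (i) you implicitly differentiate the budget identity. In (ii) you use the Gaussian regression $\E[\theta\mid\hat\theta]=\mu_g+\rho_g(\hat\theta-\mu_g)$ together with the truncated-normal mean, and both $s_g^2/v_g$ and $\lambda(x_g)$ fall; the latter falls because $x_g>0$ makes $x_g$ decrease in $v_g$. The budget-adjusted case of (ii) that worries you needs no bound on $\lambda'$. By (i), $t_A=\bar\Phi(x_A)$ rises at fixed $q$, so $x_A$ still falls.

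\textbf{The gap is in (iii).} You argue that the mean of $\theta$ drops by (ii), and that extra conditional spread ``where $e$ is concave'' lowers the mean of $e$. This does not show that $b_A$ rises, for three reasons.
\begin{enumerate}
\item A lower $\E[\theta\mid\text{sel}]$ says nothing about $\E[e(\theta)\mid\text{sel}]$ for nonlinear $e$.
\item A Jensen step needs the conditional \emph{distribution}, not its mean, to lie where $e$ is concave. Every Gaussian conditional law puts mass below $b$, where the probit is convex.
\item If you condition on the raw score, as your remark on the law of $\theta$ given $\hat\theta$ suggests, the two channels pull in opposite directions. $\E[e\mid\hat\theta=u]$ falls pointwise on the selected range, but the truncated law $\hat\theta\mid\hat\theta\ge\tau$ becomes stochastically \emph{larger} as $v_g$ grows.
\end{enumerate}
The first-order stochastic dominance claim you announce would suffice, but you assert it rather than prove it. Your variance-inflation remark actually points the other way, since a pure spread is not an FOSD decrease.

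\textbf{The missing idea is the paper's reduction.} Condition on the standardized score $Z=(\hat\theta-\mu_g)/v_g\sim N(0,1)$, whose law does not depend on $\sigma_g$. Then use the exact Gaussian--probit identity to write the selected engagement as $\E[m(Z)\mid Z\ge x_g]$, with
\[
m(z)=\Phi\!\left(\frac{\mu_g+(s_g^2/v_g)\,z-b}{\sqrt{c^2+s_g^2(1-\rho_g)}}\right).
\]
Now both channels are negative.
\begin{enumerate}
\item $m$ falls pointwise on $z\ge x_g>0$. The slope shrinks, and the larger denominator lowers the argument wherever the numerator is positive. The hypothesis $m(x_g)>1/2$ ensures this for every selected $z$, because the numerator increases in $z$.
\item $\E[m(Z)\mid Z\ge x]$ is increasing in $x$, and $x_g$ falls.
\end{enumerate}
So the one-half hypothesis enters through the sign of the numerator, not through concavity. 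Because $m$ does not involve $\tau$, the same argument also covers the budget-adjusted threshold.

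\textbf{Completing your FOSD route instead.} Standardize to $\mu_g=0$, $s_g=1$, rescaling $\tau$ and $\sigma_g$ so that the scarce regime is $\tau>0$, and set $k=1/\sigma_g$. The selected density is proportional to $\varphi(y)\Phi(k(y-\tau))$. Its score $\partial_k\log\Phi(k(y-\tau))$ increases on $y<\tau$ and is positive on $y>\tau$. Its selected mean, $\partial_k\log\bar\Phi(k\tau/\sqrt{1+k^2})$, is negative for $\tau>0$. The score therefore crosses its mean exactly once from below, which makes $\Pr(\theta>u\mid\text{sel})$ increasing in $k$ for every $u$. This gives the fixed-$\tau$ conclusion for any increasing link, without the one-half condition. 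Your proposal does not supply this single-crossing argument.
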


\begin{proof}
(i) $t_g = \Pr(\hat\theta \ge \tau) = \bar\Phi(x_g)$ since
$\hat\theta \sim N(\mu_g, v_g^2)$; differentiate $x_g$ in $v_g$ to get
$\partial x_g/\partial v_g = -x_g/v_g$, hence
$\partial t_g/\partial v_g = \varphi(x_g)x_g/v_g$.
(ii) $(\theta,\hat\theta)$ is bivariate normal with
$\mathrm{Cov} = s_g^2$, so $\E[\theta \mid \hat\theta = u] = \mu_g +
\rho_g (u-\mu_g)$; integrating over the truncated tail,
$\E[\theta \mid \hat\theta \ge \tau] = \mu_g + \rho_g \, v_g \lambda(x_g)
= \mu_g + (s_g^2/v_g)\lambda(x_g)$. Monotonicity in $\sigma_g$: $s_g^2/v_g$
decreases in $v_g$, while $\lambda(x_g)$ decreases as $x_g$ falls with $v_g$
(for $x_g > 0$); both factors move down.
For the budget-adjusted form of (i): totally differentiate the budget
identity, $\pi_A\,dt_A + \pi_B\,dt_B = 0$ with
$dt_g = \varphi(x_g)\bigl[(x_g/v_g)\,dv_g - d\tau/v_g\bigr]$ and $dv_B = 0$;
solve for $d\tau$ and substitute.
(iii) Monotonicity of the selected mean of $e(\theta)$ requires more than the
mean of $\theta$: we prove it for the probit link
$e(\theta) = \Phi\bigl((\theta-b)/c\bigr)$ (the fitted logistic link is
numerically indistinguishable on the calibrated range, and all experiments
verify the conclusion directly). Represent $e(\theta) =
\Pr(W \le (\theta-b)/c)$ with $W \sim N(0,1)$ independent, so
$\E[e(\theta)\mid \hat\theta \ge \tau, g] = \E[m(Z) \mid Z \ge x_g]$ where
$Z = (\hat\theta-\mu_g)/v_g$ and
$m(z) = \Phi\!\bigl((\mu_g + (s_g^2/v_g)z - b)\big/\sqrt{c^2 +
s_g^2(1-\rho_g)}\bigr)$ is the conditional engagement given the standardized
score. Two channels move the selected mean down as $\sigma_g$ grows. First,
$m(z)$ falls pointwise for $z > 0$ in the region where its numerator is
positive: the slope $s_g^2/v_g$ shrinks and the conditional spread
$s_g^2(1-\rho_g)$ grows. Second, the fixed-$\tau$ selection becomes less
stringent as $v_g$ grows ($x_g$ falls), and $\E[m(Z)\mid Z \ge x]$ is
increasing in $x$ for any increasing $m$ (truncated-mean monotonicity).
Both channels are negative under the stated regime ($x_g > 0$, marginal
selected engagement above one half), which the calibrated parameters satisfy
throughout the sweep.
\end{proof}

\begin{proposition}[Group-blind transformations are inert; shrinkage fails the opposite margin]
\label{prop:shrink}
(i) For any strictly increasing $h:\mathbb{R}\to\mathbb{R}$ applied to all
scores (a \emph{group-blind} transformation), threshold routing on
$h(\hat\theta)$ at budget $q$ selects exactly the receivers selected by
routing on $\hat\theta$: group-blind monotone transformations change neither
exposure nor targeting. (ii) Among group-aware transformations, posterior-mean
shrinkage $\hat\theta^{\mathrm{EB}}_g = \mu_g + \rho_g(\hat\theta-\mu_g)$
gives group-$g$ scores standard deviation $\rho_g v_g = s_g^2/v_g < s_g < v_g$.
Fix the budget $q$ and compare three routing systems, each with its \emph{own}
budget-clearing pooled threshold: routing on true scores ($\theta$, the
noiseless benchmark), on raw noisy scores ($\hat\theta$), and on
posterior-mean scores ($\hat\theta^{\mathrm{EB}}$). With a single group's
noise $\sigma_g$ growing and the other's fixed, in the scarce regime raw-score
routing strictly over-exposes the noisier group relative to the noiseless
benchmark and posterior-mean routing strictly under-exposes it, both under
the respective budget-adjusted thresholds, not at a common $\tau$. (iii) Every transformation that is strictly increasing
within each group selects each group's top members by $\hat\theta$ and
therefore inherits the targeting attenuation of Theorem~\ref{thm:noise}(ii)
unchanged; transformations reallocate exposure across groups but cannot
restore within-group targeting. Restoring the noiseless \emph{allocation}
requires explicitly group-dependent thresholds (the policies of
Proposition~\ref{prop:frontier}); no transformation, group-blind or
group-aware, restores the noiseless \emph{selection}.
\end{proposition}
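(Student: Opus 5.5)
Part (i) is a set-equality argument. For strictly increasing $h$, $h(\hat\theta_j)\ge h(\tau)$ iff $\hat\theta_j\ge\tau$. The pushforward of the pooled score distribution under $h$ therefore assigns mass $q$ to $[h(\tau),\infty)$ exactly when the original assigns mass $q$ to $[\tau,\infty)$. So the budget-clearing threshold on transformed scores is $h(\tau)$, and the selected set is identical. One caveat: if the pooled score distribution had atoms, ties could make the budget-clearing threshold non-unique. Under the Gaussian model the score distribution is continuous, so I will note this and move on. Identical selected sets give identical $t_g$, $b_g$ and $h_g$.

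Part (ii) first needs the standard deviation claim. The standard deviation of $\hat\theta^{\mathrm{EB}}_g$ is $\rho_g v_g=s_g^2/v_g$, and $s_g^2/v_g<s_g<v_g$ whenever $\sigma_g>0$. For the exposure comparison I will write each system as a two-group Gaussian mixture with group-$g$ score $N(\mu_g,w_g^2)$. Here $w_g$ equals $s_g$ under true scores, $v_g$ under raw scores, and $s_g^2/v_g$ under EB scores. Let $A$ be the noisier group, so $w_B=s_B$ is held fixed across the comparison; to isolate $A$'s noise I take $\sigma_B=0$. That gives $w_B=s_B$ under all three systems.

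The key tool is the budget-adjusted comparative static of Theorem~\ref{thm:noise}(i). The computation there used only the budget identity and the Gaussian form $t_g=\bar\Phi((\tau-\mu_g)/w_g)$, not the interpretation of $w_g$. It gives $dt_A/dw_A|_q>0$ whenever $x_A>0$. The three systems differ only in $w_A$, ordered $s_A^2/v_A<s_A<v_A$. Integrating this derivative along $w_A$ then gives $t_A^{\mathrm{EB}}<t_A^{\theta}<t_A^{\mathrm{raw}}$, which is the claim: raw over-exposes, EB under-exposes, each at its own threshold.

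The main obstacle is that integrating requires $x_A>0$ along the entire path of $w_A$ with $\tau$ re-solved at each point, not just at the endpoints. I would handle this with a monotonicity argument. As $w_A$ grows at fixed $q$, $\tau$ rises (solving for $d\tau$ gives $d\tau\propto \varphi(x_A)x_A/v_A>0$). Let $\tau^{(w)}$ be the budget-clearing threshold at spread $w$, and consider $(\tau^{(w)}-\mu_A)$ and $x_A=(\tau^{(w)}-\mu_A)/w$. The quantity $x_A$ changes sign only where $\tau=\mu_A$. At that crossing point $t_A=1/2$ for every $w$, so $t_B=(q-\pi_A/2)/\pi_B$ is pinned and hence $\tau$ is pinned. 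Whether the crossing occurs is therefore a condition on $q$ and $\mu_A,\mu_B,s_B$ only, independent of $w$. The scarce regime, read as $\tau$ above $\mu_A$ under the benchmark, then propagates to the whole path, with $d\tau/dw$ having the sign of $x_A$. I will state this explicitly as the interpretation of ``scarce regime.''

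Part (iii) reduces to within-group selection. A transformation $T_g$ strictly increasing in each group turns any pooled threshold into a group-$g$ threshold $T_g^{-1}(\tau)$ on $\hat\theta$. The selected members of each group are therefore its top $t_g$ fraction by $\hat\theta$. For any fixed $t_g$, their mean latent receptivity is $\mu_g+(s_g^2/v_g)\lambda(x_g)$ with $x_g=\bar\Phi^{-1}(t_g)$, which is the attenuation in Theorem~\ref{thm:noise}(ii) at reliability $\rho_g<1$. This is strictly below the noiseless top-$t_g$ mean $\mu_g+s_g\lambda(x_g)$. No transformation can recover the noiseless selection: a transformed score is a function of $\hat\theta$ alone, while the noiseless top set is a deterministic function of $\theta$ that is not measurable with respect to $\hat\theta$ when $\sigma_g>0$.

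The allocation, by contrast, can be restored. By Proposition~\ref{prop:frontier}, choosing group thresholds so that $t_g$ equals the noiseless $\bar\Phi((\tau^{\theta}-\mu_g)/s_g)$ reproduces the noiseless exposure in each group. This requires thresholds that depend on the group, which completes the contrast in (iii).
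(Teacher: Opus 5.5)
Your proposal is correct and follows the paper's proof in all three parts: order preservation for (i), the budget-adjusted derivative of Theorem~\ref{thm:noise}(i) applied to group $A$'s effective score SD under the ordering $s_A^2/v_A < s_A < v_A$ for (ii), and reduction to within-group $\hat\theta$-rank selection for (iii). You also make rigorous two steps the paper skips. First, you set $\sigma_B=0$ explicitly, which the paper's claim that the other group's effective SD is identical across all three systems tacitly requires. Second, you show that whether the budget-clearing threshold can reach $\mu_A$ does not depend on $w_A$, so $x_A>0$ at the benchmark persists along the entire path over which you integrate the derivative.
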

\begin{proof}
(i) A strictly increasing $h$ preserves the pooled order of scores, hence the
top-$q$ set. (ii) The noisier group's effective score SD is $v_g > s_g$ under raw routing,
$s_g^2/v_g < s_g$ under posterior-mean routing, and $s_g$ under the noiseless
benchmark, while the other group's effective SD is identical across all three
systems. By the budget-adjusted derivative of Theorem~\ref{thm:noise}(i),
group $g$'s exposure is strictly increasing in its own effective SD at fixed
$q$; the ordering $s_g^2/v_g < s_g < v_g$ thus orders the three systems'
exposures, with the noiseless benchmark strictly between. The threshold
differs across systems, and the ordering survives because
Theorem~\ref{thm:noise}(i) signs the effect under exactly that budget
adjustment.
(iii) Selection within group $g$ under any within-group strictly increasing
transformation is by $\hat\theta$ rank; Theorem~\ref{thm:noise}(ii) depends
only on that rank and on $\rho_g$.
\end{proof}

\begin{table}[h]\centering\small
\begin{tabular}{lll}
\toprule
Claim & Evidence type & Source \\
\midrule
Exposure/burden incompatibility (Thm.~\ref{thm:parity}) & theorem & n/a \\
Existence/uniqueness of $r^\ast$ (Prop.~\ref{prop:unique}) & theorem & n/a \\
Group receptivity differs ($0.30$ SD) & empirical & survey + fitted model \\
Flat frontier ($0.2$--$2.3\%$) & empirical (calibrated) & released propensities \\
Parity points $r^\ast,\beta$ + bootstrap CIs & empirical (calibrated) & released propensities \\
Differential-noise mechanism (Thm.~\ref{thm:noise}) & theorem (stated assumptions) & n/a \\
Magnitude of noise effect & semi-synthetic & calibrated to posteriors \\
Non-Gaussian robustness & synthetic & matched moments \\
Plug-in learner behavior & simulation & empirical propensities as truth \\
\bottomrule
\end{tabular}
\caption{Provenance of each result: what is proved, what is measured on the
public release, and what is calibrated or simulated.}
\label{tab:provenance}
\end{table}

\paragraph{Calibration to the release.} Posterior SDs of $\theta_{\mathrm{recv}}$
computed under the released model are near-identical by gender
($0.367$ vs.\ $0.365$; reliabilities $0.858/0.876$), a measurement-equity
finding worth reporting in its own right, and modestly language-differential
($0.390$ EN vs.\ $0.363$ RU, ratio $1.075$). A semi-synthetic sweep calibrated
to the fitted group parameters ($\mu_F=-0.17, s_F=0.99; \mu_M=0.06, s_M=1.06$;
base $\sigma=0.37$; $q=0.10$; $e(\theta)$ the fitted Y4 top-category response
function) confirms the theory to within simulation error throughout
(Figure~\ref{fig:noise}): doubling the women's measurement noise raises their
exposure ratio from $0.62$ to $0.90$ while their selected members' mean latent
receptivity falls from $1.60$ to $1.24$, the per-capita burden ratio flips from
$0.73$ to $1.61$, and yield declines. This is an apparent fairness improvement that
is, in substance, a burden transfer.

\begin{figure}[h]\includegraphics[width=\textwidth]{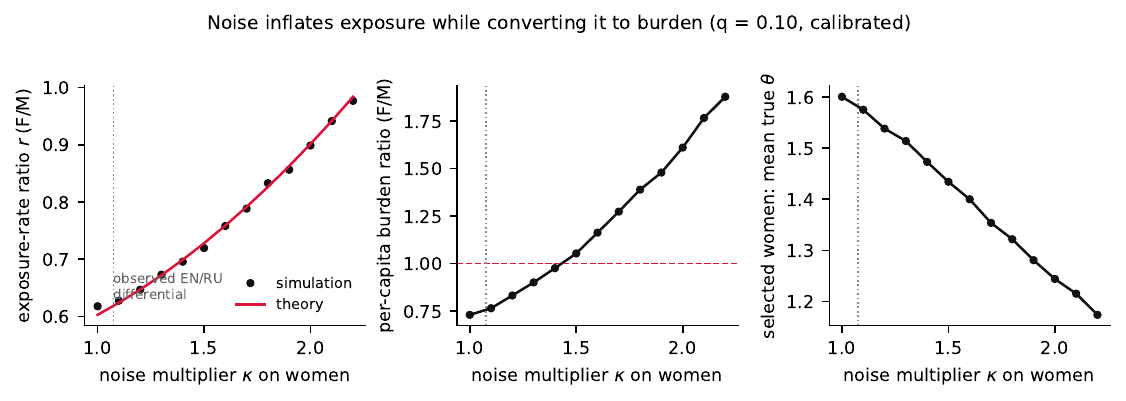}\caption{Noise sweep, calibrated.}\label{fig:noise}\end{figure}

\section{Computing constrained optima}

\begin{corollary}[Bisection recovers any parity-constrained optimum]
\label{prop:algo}
By the strict monotonicity of $\beta(r)$ established in
Proposition~\ref{prop:unique}, for any target $\beta^\dagger$ in the feasible
range the yield-optimal group-threshold policy meeting it is unique and found
by bisection on $r$: after sorting within groups ($O(n\log n)$), each iterate
costs $O(1)$ with cumulative-sum tables and $\log_2(1/\epsilon)$ iterates
suffice. The same routine computes exposure-constrained, burden-constrained,
or band-constrained ($r \in [r^\ast,1]$) optima.
\end{corollary}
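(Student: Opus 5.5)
The plan is to reduce the corollary to a one-dimensional monotone root-finding problem and then account for the cost of each step.

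\textbf{Reduction to one variable.} Fix $q$. Proposition~\ref{prop:frontier} says that, once the group fractions $(t_A,t_B)$ are fixed, within-group top-fraction selection is yield-optimal. So every parity-constrained optimum can be sought inside the group-threshold family. That family is parametrized by the single scalar $r$, through $t_B(r) = q/(\pi_A r + \pi_B)$ and $t_A(r) = r\,t_B(r)$.

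\textbf{Existence and uniqueness.} A burden target $\beta^\dagger$ then picks out the set $\{r : \beta(r)=\beta^\dagger\}$. By Proposition~\ref{prop:unique}, $\beta(r)$ is continuous and strictly increasing on the feasible range. Hence, for any $\beta^\dagger$ in that range, the intermediate value theorem gives a root and strict monotonicity makes it unique. Because the root is unique, the feasible group-threshold policy is unique, and it is trivially the yield-optimal one.

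The exposure-constrained case needs no search: the constraint fixes $r$ directly. The band-constrained case $r\in[r^\ast,1]$ works as follows.
\begin{itemize}
\item Compute $r^\ast$ by the same bisection.
\item Maximize $Y^\star(q,\cdot)$ over the interval.
\item Use the slope formula $\partial Y^\star/\partial r \propto Q_A(t_A)-Q_B(t_B)$ of Proposition~\ref{prop:frontier}. This is decreasing in $r$, since $t_A$ rises and $t_B$ falls, so $Y^\star$ is unimodal in $r$ with peak at $r_0$.
\item The optimum is therefore the projection of $r_0$ onto $[r^\ast,1]$, found by bisecting on the sign of the marginal-threshold gap.
\end{itemize}

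\textbf{Bisection and its cost.} Bisect on $r$ over the feasible interval, keeping the invariant $\beta(\text{lo})<\beta^\dagger\le\beta(\text{hi})$. Strict monotonicity ensures each midpoint test discards the half that cannot contain the root. After $\log_2(1/\epsilon)$ halvings the bracket has width at most $\epsilon$ times the initial width.

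For the per-iterate cost, sort each group's propensities in decreasing order, which takes $O(n\log n)$. Then precompute prefix sums $C_g(k)=\sum_{i\le k} e_{(i)}$, which takes $O(n)$. At iterate $r$, form $k_g=\lfloor t_g n_g\rfloor$, with a fractional weight on the boundary element for exactness. Then $h_g = t_g - C_g(k_g)/n_g$, plus the boundary correction, and $Y$ is read off in $O(1)$. So the total cost is $O(n\log n + \log(1/\epsilon))$.

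\textbf{Main obstacle.} The delicate step is the discrete, finite-sample version. With atoms, $\beta(r)$ is only piecewise smooth and could have flat pieces. Under integer selection it could also jump over $\beta^\dagger$, which would threaten the strict monotonicity assumed from Proposition~\ref{prop:unique}.

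I would handle this by allowing fractional selection of the boundary receiver, matching the fractional-knapsack relaxation in Proposition~\ref{prop:frontier}. This makes $h_g$ piecewise linear and continuous in $t_g$. Strict increase of $\phi_g$ still holds because every added unit of mass carries burden $1-e\ge 0$. Where $e=1$ exactly, the increase is only weak, and there I would note that bisection still returns a root and that uniqueness holds up to ties.
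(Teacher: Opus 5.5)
Your proposal is correct and takes the same route as the paper, whose justification is the argument embedded in the corollary itself: strict monotonicity of $\beta(r)$ from Proposition~\ref{prop:unique} gives a unique root in the one-parameter threshold family, which bisection finds with $O(1)$ prefix-sum evaluations after an $O(n\log n)$ sort; your band case (via the sign of $Q_A(t_A)-Q_B(t_B)$) and fractional boundary selection only fill in details the paper leaves implicit. One small overreach, harmless here: Proposition~\ref{prop:frontier} alone justifies restricting to threshold policies only under exposure constraints, because a burden constraint depends on which members are selected and not just on $(t_A,t_B)$; since the corollary is stated within the group-threshold family, nothing in your argument needs that claim.
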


On the release, the routine verifies $\beta$'s monotonicity over
$r \in [0.45, 1.3]$ and recovers the burden-parity point $r^\ast = 0.70$ (bootstrap 90\% CI $[0.65, 0.78]$) at
$q = 0.10$ in fifty iterations. An uncertainty-aware variant routes on
posterior exceedance probabilities $\Pr(\theta \ge \tau_g \mid \text{data})$,
which by Proposition~\ref{prop:shrink} cannot be replicated by any pooled
transformation of point scores.

\section{Fair exploration}

When propensities must be learned from engagement feedback, exploration itself
distributes burden: uncertain receivers are routed precisely because they are
uncertain, and early-round estimation noise is measurement noise in the sense
of Theorem~\ref{thm:noise}.

\textbf{Algorithm (plug-in burden-constrained learning).} At each round $t$,
maintain Beta posteriors over each receiver's propensity; let $\hat e_j(t)$ be
the posterior means. Compute the \emph{estimated} burden-parity ratio
$\hat r^\ast_t$ by running the bisection of Corollary~\ref{prop:algo} on the
plug-in burden curve $\hat\beta_t(r)$ (group cumulative sums of the sorted
$\hat e_j(t)$); allocate group counts at $\hat r^\ast_t$; within each group,
route by Thompson samples, reserving a vanishing fraction $\gamma_t = t^{-1/2}$
of each group's slots for uniform exploration. The constraint is thus
\emph{learned}, not supplied: the algorithm never observes true propensities
or the true $r^\ast$. Forced exploration guarantees the coverage that the
regret and violation bounds require; without it we state only the empirical
behavior.

\begin{proposition}[Consistency of plug-in burden-constrained learning]
\label{prop:bandit}
Consider the plug-in policy with forced-exploration rate
$\gamma_t = t^{-1/2}$ on a finite pool with Bernoulli feedback, and suppose
(A1) $\beta'(r^\ast) \ge c > 0$, the margin guaranteed strictly by
Proposition~\ref{prop:unique} whenever $r^\ast$ is interior. Forced
exploration ensures each receiver is routed
$\Omega(\sqrt{T})$ times, so the propensity estimates satisfy
$\max_j |\hat e_j(t) - e_j| \to 0$ almost surely. Then the estimated
constraint is consistent, $\hat r^\ast_t \to r^\ast$ a.s., and the per-round
burden gap $|h_{A,t}-h_{B,t}| \to 0$; the time-averaged constraint violation
$V_T/T \to 0$. We state consistency rather than a rate: the
$\tilde O(\sqrt T)$ scaling of both regret and cumulative violation is
observed empirically below but requires a self-bounding argument for the
adaptive threshold that we do not claim here.
\end{proposition}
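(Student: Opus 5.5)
The argument runs through a chain of continuity and inversion steps. First, forced exploration gives uniform consistency of the propensity estimates. That makes the plug-in burden curve converge uniformly to the true one. The strict-slope margin (A1) then lets us invert this to get $\hat r^\ast_t \to r^\ast$. Finally, consistency of the allocation plus consistency of the within-group selection gives vanishing burden gaps, and Cesàro averaging gives $V_T/T \to 0$.

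\textbf{Step 1: coverage and estimates.}
At round $t$ each group reserves a fraction $\gamma_t = t^{-1/2}$ of its slots for uniform exploration over a finite pool of size $n$. So receiver $j$ is routed by round $T$ at least $N_j(T)$ times, where $N_j(T)$ is a sum of independent Bernoulli indicators with mean $\sum_{t\le T} c\, t^{-1/2}/n = \Omega(\sqrt T)$.

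A Chernoff bound on $N_j(T)$, summed over $T$ (the tails are $\exp(-\Omega(\sqrt T))$ and hence summable), shows via Borel--Cantelli that $N_j(T) = \Omega(\sqrt T)$ eventually a.s. The posterior mean of a Beta prior differs from the empirical frequency by $O(1/N_j)$. The strong law along the random subsequence of pulls of $j$ (a standard optional-sampling argument, since the feedback at pulls is i.i.d.\ Bernoulli$(e_j)$) gives $\hat e_j(t) \to e_j$ a.s. A union over the finite pool yields $\max_j |\hat e_j(t)-e_j| \to 0$.

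\textbf{Step 2: uniform convergence of the burden curve.}
On a finite pool the group fractions $t_g(r)$ are rounded to integer counts. The plug-in curve $\hat\beta_t(r) = \hat\phi_A(t_A(r))/\hat\phi_B(t_B(r))$ is built from sorted cumulative sums of the $\hat e_j(t)$.

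The sum of the top $k$ values is $1$-Lipschitz in the sup norm of its inputs, because sorting is a contraction in $\ell_\infty$ for top-$k$ sums. Hence $\sup_r |\hat\phi_g - \phi_g| \le \max_j|\hat e_j - e_j| \to 0$. On the feasible range $\phi_B$ is bounded away from zero, so the ratio also converges uniformly: $\sup_r |\hat\beta_t(r) - \beta(r)| \to 0$ a.s.

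\textbf{Step 3: inverting to the constraint (the main obstacle).}
Here we invoke (A1) together with the strict monotonicity and continuity from Proposition~\ref{prop:unique}. Given $\delta>0$, monotonicity gives $\beta(r^\ast-\delta) < 1 < \beta(r^\ast+\delta)$ with a gap $\eta(\delta)>0$; (A1) gives $\eta \gtrsim c\delta$ locally. Once $\sup|\hat\beta_t-\beta| < \eta$, every root located by the bisection of Corollary~\ref{prop:algo} on $\hat\beta_t$ lies in $(r^\ast-\delta, r^\ast+\delta)$.

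This is the delicate point for two reasons. First, $\hat\beta_t$ need not be monotone, but bisection still returns a sign-change point, and all sign changes lie in that interval. Second, on a finite pool $\beta$ is a step function, so the continuity and the margin must be read at the resolution of the discretization, or the pool taken large enough that $r^\ast$ is identified up to one integer count. I would first try to handle this by defining $r^\ast$ as the crossing point of the piecewise-linear interpolation of the cumulative sums. That interpolation keeps $\beta$ continuous and strictly increasing, so Proposition~\ref{prop:unique} applies verbatim. Letting $\delta\to0$ along a countable sequence gives $\hat r^\ast_t \to r^\ast$ a.s.

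\textbf{Step 4: realized burden gap and violation.}
The realized per-round burden uses the true $e_j$ of the receivers actually routed, and within each group these are chosen by Thompson samples plus exploration. The exploration fraction $\gamma_t \to 0$. The posteriors concentrate as their variance is $O(1/N_j) \to 0$, so with probability tending to one the Thompson samples order receivers as the true $e_j$ do, up to ties among receivers whose propensities are equal, and ties do not affect burden.

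The realized group fractions equal $t_g(\hat r^\ast_t) \to t_g(r^\ast)$. Hence the realized $h_{g,t} \to \phi_g(t_g(r^\ast))$, and the burden gap tends to $|\phi_A - \phi_B| = 0$ at $r^\ast$ in expectation and a.s. Since the violation is bounded and each term tends to $0$, Cesàro averaging gives $V_T/T \to 0$.

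I would state no rate: the Thompson and threshold coupling is exactly the self-bounding issue the statement declines to claim.
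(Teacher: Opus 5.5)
Your argument is correct and follows the same chain as the paper's proof: forced-exploration coverage via Chernoff and Borel--Cantelli, the per-receiver strong law on the finite pool, uniform convergence of the plug-in cumulative-sum burden curve with the denominator bounded away from zero, inversion of the strictly increasing $\beta$ to get $\hat r^\ast_t \to r^\ast$, continuity of the per-round gap, and Ces\`aro averaging. You add more detail than the paper on bisection, finite-pool discreteness, and Thompson-sample concentration; one small correction is that the raw top-$k$ sum is $k$-Lipschitz in sup norm, not $1$-Lipschitz, whereas the per-capita quantity $\phi_g(t)=t\,(1-m_g(t))$ has constant $t\le 1$, and that is exactly what your bound $\sup_r|\hat\phi_g-\phi_g|\le\max_j|\hat e_j-e_j|$ needs.
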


Calibrated experiment (real pool, $n{=}2{,}499$, strict propensities as
ground truth; $T{=}250$, $q{=}0.10$; \textbf{30 seeds}, bands are
5th--95th percentiles; Figure~\ref{fig:bandit}): the plug-in estimate
$\hat r^\ast_t$ starts uninformative ($1.01$ at $t{=}1$) and converges to
$0.710 \pm 0.008$ against the true $0.706$; final cumulative burden ratio
$1.027$ $[1.003, 1.046]$ at cumulative regret $36.0$ $[35.2, 36.8]$ versus
$35.2$ $[34.4, 36.1]$ unconstrained: a $2.3\%$ premium for a learned
constraint. Unconstrained Thompson drifts to the unconstrained ratio
($0.794$) after an early transient above parity (the live form of
Theorem~\ref{thm:noise}'s mechanism), and the exposure-parity policy
converges to a $1.47$ burden ratio. The static incompatibility of
Theorem~\ref{thm:parity} persists under learning, and the burden-parity point
is learnable at negligible cost; the choice among parities, not the price of
learning, is the operative design decision.

\begin{figure}[h]\includegraphics[width=\textwidth]{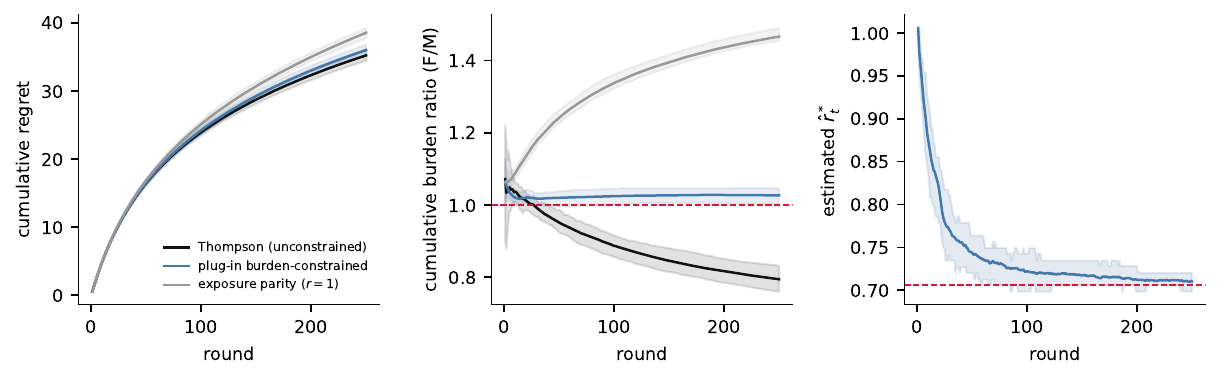}\caption{Plug-in burden-constrained learning, 30 seeds: regret (left), cumulative burden ratio (center), and convergence of the learned $\hat r^\ast_t$ to the true parity point (right, dashed).}\label{fig:bandit}\end{figure}

\subsection{Robustness and stress tests}
\label{sec:stress}
Three batteries probe whether the parity geometry is a Gaussian artifact, a
sampling accident, or a knife-edge of one budget (Figure~\ref{fig:stress}).
\textbf{Budget sweep with bootstrap uncertainty.} Across
$q \in \{0.02, 0.05, 0.10, 0.20, 0.40\}$, the burden-parity point rises
monotonically ($r^\ast = 0.61, 0.67, 0.70, 0.80, 0.88$) and the burden ratio
at exposure parity falls ($2.11, 1.88, 1.69, 1.38, 1.17$); $300$-replicate
bootstrap $90\%$ intervals exclude $r^\ast = 1$ and $\beta = 1$ at every
budget (e.g.\ $r^\ast \in [0.50, 0.72]$ at $q{=}0.02$;
$\beta \in [1.12, 1.21]$ at $q{=}0.40$), so the incompatibility is not a
sampling accident, and the reported yield losses remain small throughout
($0.3$--$3.7\%$). \textbf{Noise phase diagram.} Over the plane
$\sigma_F/\sigma_M \in [1, 2.5] \times q \in [0.02, 0.5]$, calibrated Gaussian
theory maps group exposure: the exposure-parity contour ($r{=}1$) is
reachable by noise alone at small budgets (the masquerade of
Theorem~\ref{thm:noise}), while the unconstrained-ratio contour marks where
noise has not yet moved allocation. \textbf{Non-Gaussian stress.} Replacing
the latent Gaussians with logistic, heavy-tailed ($t_3$), and bimodal-mixture
populations matched in location and scale preserves the band's existence and
its widening under scarcity in every case; heavy tails compress the band
somewhat, and no distribution eliminates it. The
signed-exposure geometry is structural, not distributional.

\begin{figure}[h]\includegraphics[width=\textwidth]{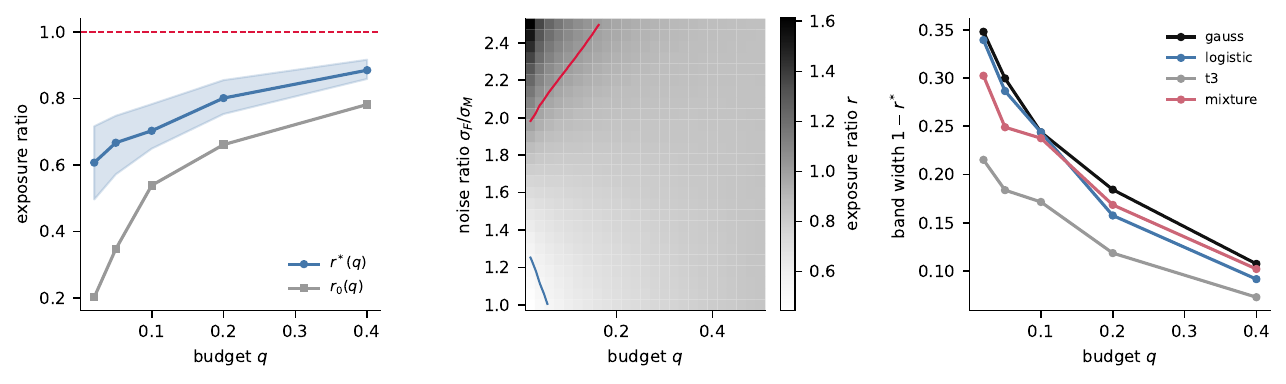}
\caption{Stress tests. Left: $r^\ast(q)$ and $r_0(q)$ with bootstrap bands;
the gap to $1$ is the incompatibility band. Center: exposure ratio over the
noise-ratio $\times$ budget plane; contours mark $r = r_0$ and $r = 1$.
Right: band width $1-r^\ast$ by budget under four latent distributions.}
\label{fig:stress}\end{figure}

\section{Discussion and limitations}
Our empirical quantities are stated-preference propensities from one platform's user base; they measure willingness, not behavior, and the release's demographic granularity limits group analysis to a binary gender variable and coarse age bands. The theory is broader than the calibration: Propositions~1--5 need only monotone top-mean functions, and the Gaussian forms of Theorem~3 extend to any monotone link at the cost of closed forms. We emphasize what the results do not say: the band $(r^\ast,1)$ is a menu, not a verdict: selecting exposure parity, burden parity, or an interior point is a value judgment about whose interests algorithmic attention serves, and our contribution is to price the menu and expose the incompatibility that prevents ordering everything at once. Signed-exposure optimization applies to populations for whom contact is institutionally permissible but receptivity is uncertain; explicit opt-out and consent restrictions dominate the optimization and must be enforced as hard constraints before any fairness objective is considered. Within that scope, the structure applies wherever machine-initiated contact meets heterogeneous willingness: cold outreach, notification targeting, solicitation, and any agentic system that opens conversations people did not start.

\section*{Ethics statement}
All analyses use a public, anonymized dataset released under CC~BY with a documented $k \ge 5$ anonymity audit; no new human data were collected. Group analyses respect the release's granularity and are reported as properties of routing policies, not of people. The routing machinery studied here is dual-use: the same thresholds that equalize burden could be tuned to concentrate it, and we discuss constraints precisely so that deployment conversations have a vocabulary for the second margin. We make no claim that any parity point is universally correct.

\section*{AI Usage Disclosure}
Generative AI tools were used in this work for three purposes: copy-editing author-written text for readability, assisting with the writing and debugging of analysis code, and formatting bibliographic references. No task in the required-disclosure category involved generative AI, and no substantive content, methodology, analysis, or result was AI-generated. AI-assisted edits were checked against the original text to confirm meaning and claims were unchanged; AI-assisted code was read, tested, and verified by the authors to reproduce the reported results; and formatted references were checked against original sources. We take responsibility for the final content of this work.

\bibliographystyle{iclr2027_conference}
\bibliography{refs}
\appendix
\section{Experimental details and deferred material}
\label{app:details}
Proofs of Propositions~1, 4, 5 and Theorems~2, 3 appear inline in the main text. This appendix records experimental settings: frontier and parity computations use the released strict-coding propensities ($n{=}2{,}499$; soft coding replicates all findings with $r^\ast = 0.70/0.81/0.88$ and $\beta$ at exposure parity $1.74/1.37/1.21$); the noise sweep uses $4\times10^5$ draws per point, group parameters $(\mu_F,s_F,\mu_M,s_M)=(-0.17,0.99,0.06,1.06)$, base $\sigma=0.37$, engagement link the fitted Y4 top-category response function, seed 7; the bandit experiment uses the real pool, $T{=}250$, $q{=}0.10$, Beta$(1,1)$
priors, Bernoulli feedback, seeds $1000$--$1029$ with paired policy streams.
Stress tests: bootstrap resamples respondents with replacement ($B{=}300$,
seed 3); the phase diagram uses the calibrated group parameters of
Section~
\ref{sec:stress} with the budget-adjusted pooled threshold; synthetic
populations use $4\times10^4$ draws per group matched to the calibrated
location and scale (logistic scale $0.551s$; $t_3$ scaled by $\sqrt{3}$;
mixture $0.3\,N(\mu{+}1.2,0.5^2) + 0.7\,N(\mu{-}0.5,0.7^2)$), engagement via
the fitted Y4 link. Propensities are model-derived quantities from the public
release; bootstrap intervals quantify sampling variability, and the released
notebooks provide the fold structure for out-of-fold recomputation.

\section{Proof of Proposition~\ref{prop:bandit}}
\label{app:proofs}
Write $N_j(t)$ for receiver $j$'s routing count. The forced-exploration
schedule routes a uniformly random $\gamma_t = t^{-1/2}$ fraction of each
group's slots; since each group receives $\Theta(qN)$ slots per round, every
receiver's expected forced-exploration count through round $T$ is
$\Omega\!\bigl(\sum_{t\le T} t^{-1/2}\bigr) = \Omega(\sqrt T)$, and a
Chernoff bound with Borel--Cantelli gives $N_j(T) = \Omega(\sqrt T)$ for all
$j$ almost surely. Hence $\max_j |\hat e_j(t)-e_j| \to 0$ a.s.\ (SLLN per
receiver, finite pool). The plug-in burden curve is uniformly consistent,
$\sup_r|\hat\beta_t(r)-\beta(r)| \to 0$, because each group's cumulative-sum
curve converges uniformly and the ratio map is continuous on the compact
feasible range (bounded away from zero burden in the scarce regime). By (A1)
and the strict monotonicity of Proposition~\ref{prop:unique}, the map
$\beta \mapsto r^\ast$ is continuous at $r^\ast$, so
$\hat r^\ast_t \to r^\ast$ a.s. The per-round burden gap is a continuous
function of the allocation ratio and the (converging) selected-group means,
hence $|h_{A,t}-h_{B,t}| \to 0$; Ces\`aro averaging gives $V_T/T \to 0$.
This establishes consistency. We do not claim a $\sqrt T$ rate: bounding
$\sum_t |\hat r^\ast_t - r^\ast|$ requires controlling the adaptive
threshold's fluctuations through a self-bounding argument, which we leave to
future work and probe empirically in Section~\ref{sec:stress} and
Figure~\ref{fig:bandit}. \qed

\section{Reproducibility}
Every number, table, and figure derives from the public deposit (DOI in the dataset citation) via scripts included in the supplementary material with fixed seeds; the semi-synthetic and bandit experiments state their generative parameters in Appendix~A, and theory-versus-simulation agreement is reported alongside each experiment. Supplementary material is also available via \url{https://anonymous.4open.science/r/Signed_Exposure}.
\end{document}

%% file: math_commands.tex
\usepackage{amsmath,amsfonts,bm}

\def\eqref#1{equation~\ref{#1}}

\def\1{\bm{1}}

\DeclareMathAlphabet{\mathsfit}{\encodingdefault}{\sfdefault}{m}{sl}
\SetMathAlphabet{\mathsfit}{bold}{\encodingdefault}{\sfdefault}{bx}{n}

\newcommand{\E}{\mathbb{E}}

